\journal{Journal of \LaTeX\ Templates}
\newtheorem{theorem}{Theorem}[section]
\newtheorem{proposition}[theorem]{Proposition}
\DeclareMathOperator*{\argmax}{arg\,max}
\begin{document}
\definecolor{myblue}{RGB}{80,80,160}
\definecolor{mygreen}{RGB}{80,160,80}

\begin{frontmatter}
\title{Model-based clustering for random hypergraphs}
\author{Tin-Lok James Ng\fnref{myfootnote}}
\author{Thomas Brendan Murphy\fnref{myfootnote}}
\address{School of Mathematics and Statistics, University College Dublin}
\fntext[myfootnote]{This work was supported by the Science Foundation Ireland funded Insight Research Centre (SFI/12/RC/2289).}

\begin{abstract}
A probabilistic model for random hypergraphs is introduced to represent unary, binary and higher order interactions among objects in real-world problems. This model is an extension of the Latent Class Analysis model, which captures clustering structures among objects. An EM (expectation maximization) algorithm with MM (minorization maximization) steps is developed to perform parameter estimation while a cross validated likelihood approach is employed to perform model selection. The developed model is applied to three real-world data sets where interesting results are obtained. 
\end{abstract}

\begin{keyword}
Hypergraph \sep Latent Class Analysis \sep Minorization Maximization 
\end{keyword}

\end{frontmatter}

\section{Introduction}
A large number of random graph models have been proposed \citep{nowicki01, hoff02, handcock07, latouche11} to describe complex interactions among objects of interest. Pairwise relationships among objects can be naturally represented as a graph, in which the objects are represented by the vertices, and two vertices are joined by an edge if certain relationship exists between them. While graphs are capable of representing pairwise interaction between objects, they are inadequate to represent higher order and unary interactions that are typically observed in many real-world problems. Examples of higher-order and unary relationships include co-authorship on academic papers, co-appearance in movie scenes, and songs performed in a concert. 

For example, the study of coauthorship networks of scientists have attracted significant research interests in both natural and social sciences \citep{newman01a, newman01b, newman04, moody04, azondekon18}. Such networks are typically constructed by connecting two scientists if they have coauthored one or more papers together. However, as we will illustrate below, such representation inevitably results in loss of information while a hypergraph representation naturally preserves all information. A hypergraph is a generalization of a graph in which hyperedges are arbitrary sets of vertices, and can contain any number of vertices. As a result, hypergraphs are capable of representing relationships of any arbitrary orders.

We consider a simple example of a coauthorship network with 7 authors and 4 papers in order to illustrate the benefits of hypergraph modelling. A hypergraph representation of the network is given in Figure \ref{fig:hypergraph} where the vertices $v_1, v_2, \ldots, v_7$ represent the authors while the hyperedges $e_1, \ldots, e_4$ represent the papers. For example, the paper $e_1$ is written by four authors $v_1, v_2$, $v_3$ and $v_4$, and the paper $e_2$ is written by two authors $v_2$ and $v_3$, while the paper $e_4$ has a single author $v_4$. 

On the other hand, a graph representation of this coauthorship network with edges between any two authors who have coauthored at least one paper results in the edge set $\{ (v_1, v_2), (v_1, v_3), (v_1, v_4), (v_2, v_3),\\ (v_2, v_4), (v_3, v_4), (v_3, v_5), (v_3, v_6), (v_5, v_6)\}$. It is evident that much information is lost with this representation. In particular, this representation removes information about the number of authors that co-authored a paper. For example, one can only deduce from this edge set that $v_3$ has co-authored with $v_1$ and $v_2$ while unable to conclude that the co-authorship was for the same paper. Furthermore, the hyperedge $e_4$ which contains a singleton $v_4$ is left out in the graph representation.

A number of random hypergraph models were studied in probability and combinatorics literature where theoretical properties such as phase transition, chromatic number were investigated \citep{karonski02, goldschmidt05, panafieu15, dyer15, poole15}. A novel parametrization of distributions on hypergraphs based on geometry of points is proposed in \cite{lunagomez17} which is used to infer Markov structure for multivariate distributions.

On the other hand, statistical modeling with random hypergraph is less explored. \cite{stasi14} introduced the hypergraph beta model with three variants, which is a natural extension of the beta model for random graphs \citep{holland81}. In their model, the probability of a hyperedge $e$ appearing in the hypergraph is parameterized by a vector $ \beta \in \mathbf{R^{N}} $, which represents the ``attractiveness'' of each vertex. However, their model does not capture clustering among objects which is a typical real world phenomenon. In addition, the assumption of an upper bound on the size of hyperedges violates many real world data sets.

One may equivalently represent a hypergraph using a bipartite network (also called two-mode network and affiliation network). Two-mode networks consist of two different kinds of vertices and edges can only be observed between the two types of vertices, but not between vertices of the same type. A hypergraph can be represented as a two-mode by considering the hyperedges as a second type of vertices. For example, an equivalent bipartite representation of the hypergraph shown in Figure \ref{fig:hypergraph} is provided in Figure \ref{fig:bipartite} where the hyperedges $\{ e_1, \ldots, e_4\}$  are now replaced by the four green vertices. 

Two-mode networks have been studied in various disciplines including computer science \citep{perugini04}, social sciences \citep{faust02, robins04, koskinen12, friel16} and physics \citep{lind05}. A number of approaches have been proposed to analyze and model two-mode network data \citep{borgatti97, robins04, doreian04, latapy08, wang09, snijders13}. In particular, models originally developed for binary networks were extended for two-mode networks.

\cite{doreian04} developed a blockmodeling approach of two-mode network data which aims to simultaneously partition the two types of vertices into blocks. \cite{skvoretz99} proposed exponential random graph models (ERGMs) for two-mode networks which models the logit of the probability of an actor belong to an event as a function of actor and event specific effects and other graph statistics. A clustering algorithm for two-mode network is developed in \cite{field06} based on the modelling framework in \cite{skvoretz99}. Several extensions to the ERGMs for bipartite networks are proposed in recent years \citep{wang09, wang13}. \cite{snijders13} proposed a methodology for studying the co-evolution of two-mode and one-mode networks. A network autocorrelation model for two-mode networks is introduced in \cite{fujimoto11}.

Representing network observations using two-mode networks has the benefits of modelling vertices of both types jointly. However, in analyzing a two-mode network, one type of vertices may attract most interest. For example, in co-authorship networks, the main interest may lie in the collaborations rather than in co-authored papers. In such scenarios, a hypergraph representation is most nature by converting one type of vertices into hyperedges with no loss of information. 

In this paper, we propose the Extended Latent Class Analysis (ELCA) model for random hypergraphs, which is a natural extension of the Latent Class Analysis (LCA) model \citep{lazarsfeld68, goodman74, celeux91} and includes the LCA model as a special case. The model is applied to two applications, including Star Wars movie scenes and Lady Gaga concerts 2014.

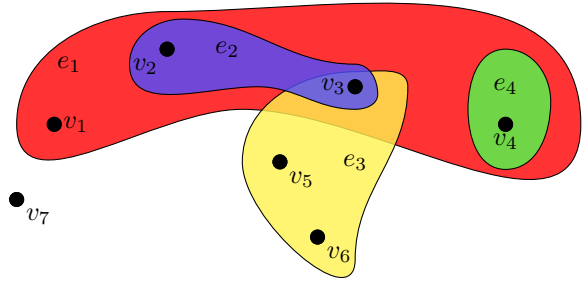
\begin{figure}
\centering
\begin{tikzpicture}
    \node (v1) at (0,2) {};
    \node (v2) at (1.5,3) {};
    \node (v3) at (4,2.5) {};
    \node (v4) at (6,2) {};
    \node (v5) at (3,1.5) {};
    \node (v6) at (3.5,.5) {};
    \node (v7) at (-0.5,1) {};

    \begin{scope}[fill opacity=0.8]
    \filldraw[fill=red!180] ($(v1)+(-0.5,0)$) 
        to[out=90,in=180] ($(v2) + (0,0.5)$) 
        to[out=0,in=90] ($(v4) + (1,0)$)
        to[out=270,in=0] ($(v2) + (1,-0.8)$)
        to[out=180,in=270] ($(v1)+(-0.5,0)$);
    \filldraw[fill=green!70] ($(v4)+(-0.5,0.2)$)
        to[out=90,in=180] ($(v4)+(0,1)$)
        to[out=0,in=90] ($(v4)+(0.6,0.3)$)
        to[out=270,in=0] ($(v4)+(0,-0.6)$)
        to[out=180,in=270] ($(v4)+(-0.5,0.2)$);
    \filldraw[fill=yellow!80] ($(v5)+(-0.5,0)$)
        to[out=90,in=180] ($(v3)+(0,0.2)$)
        to[out=0,in=90] ($(v3)+(0.7,0)$)
        to[out=270,in=90] ($(v6)+(0.5,-0.3)$)
        to[out=270,in=270] ($(v5)+(-0.5,0)$);
    \filldraw[fill=blue!70] ($(v2)+(-0.5,-0.2)$) 
        to[out=90,in=180] ($(v2) + (0.2,0.4)$) 
        to[out=0,in=180] ($(v3) + (0,0.3)$)
        to[out=0,in=90] ($(v3) + (0.3,-0.1)$)
        to[out=270,in=0] ($(v3) + (0,-0.3)$)
        to[out=180,in=0] ($(v3) + (-1.3,0)$)
        to[out=180,in=270] ($(v2)+(-0.5,-0.2)$);
    \end{scope}

    \foreach \v in {1,2,...,7} {
        \fill (v\v) circle (0.1);
    }

    \fill (v1) circle (0.1) node [right] {$v_1$};
    \fill (v2) circle (0.1) node [below left] {$v_2$};
    \fill (v3) circle (0.1) node [left] {$v_3$};
    \fill (v4) circle (0.1) node [below] {$v_4$};
    \fill (v5) circle (0.1) node [below right] {$v_5$};
    \fill (v6) circle (0.1) node [below right] {$v_6$};
    \fill (v7) circle (0.1) node [below right] {$v_7$};

    \node at (0.2,2.8) {$e_1$};
    \node at (2.3,3) {$e_2$};
    \node at (4,1.5) {$e_3$};
    \node at (6,2.5) {$e_4$};
\end{tikzpicture}
\caption{A hypergraph representation of a coauthorship network.}
\label{fig:hypergraph}
\end{figure}

\begin{figure}
\centering
\begin{tikzpicture}[thick,
  every node/.style={draw,circle},
  fsnode/.style={fill=myblue},
  ssnode/.style={fill=mygreen},
  every fit/.style={ellipse,draw,inner sep=-2pt,text width=2cm},
  ->,shorten >= 3pt,shorten <= 3pt
]

\begin{scope}[start chain=going below,node distance=5mm]
\foreach \i in {1,2,...,7}
  \node[fsnode,on chain] (f\i) [label=left: \i] {};
\end{scope}

\begin{scope}[xshift=4cm,yshift=-0.5cm,start chain=going below,node distance=7mm]
\foreach \i in {1,...,4}
  \node[ssnode,on chain] (s\i) [label=right: \i] {};
\end{scope}

\node [myblue,fit=(f1) (f7),label=above:$V$] {};
\node [mygreen,fit=(s1) (s4),label=above:$E$] {};

\draw (f1) -- (s1);
\draw (f2) -- (s1);
\draw (f3) -- (s1);
\draw (f4) -- (s1);
\draw (f2) -- (s2);
\draw (f3) -- (s2);
\draw (f3) -- (s3);
\draw (f5) -- (s3);
\draw (f6) -- (s3);
\draw (f4) -- (s4);
\end{tikzpicture}
\caption{Bipartite graph representation of the hypergraph in Figure \ref{fig:hypergraph}.}
\label{fig:bipartite}
\end{figure}
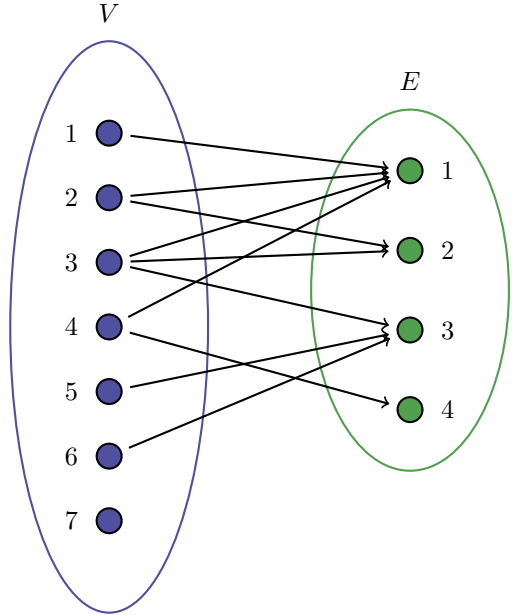

\section{Model and Motivation}
\subsection{Hypergraph}
A hypergraph is represented by a pair $H = (V,E)$, where $V = \{V_{1},V_{2},\cdots,V_{N}\}$ is the set of $N$ vertices and $E = \{e_{1}, e_{2}, \cdots, e_{M}\}$ is the set of $M$ hyperedges. A hyperedge $e$ is a subset of $V$, and we allow repetitions in the hyperedge set $E$. Thus, the hypergraph $H$ can alternatively be represented with a $ N \times M $ matrix $ \mathbf{x} = (x_{ij})$ where $ x_{ij} = 1 $ if vertex $V_{i}$ appears in hyperedge $e_{j}$ and $ x_{ij}=0$ otherwise.

\subsection{Latent Class Analysis Model for Random Hypergraphs}
The binary latent class analysis (LCA) model \citep{lazarsfeld68, goodman74} is a commonly used mixture model for high dimensional binary data. It assumes that each observation is a member of one and only one of the $G$ latent classes, and conditional on the latent class membership, the manifest variables are mutually independent of each other. The LCA model appears to be a natural candidate to model random hypergraphs where hyperedges are partitioned into $G$ latent classes, and the probability that a hyperedge $e \in E$ contains a vertex $v \in V$ depends only on its latent class assignment. 

Let $ \pi = (\pi_{1}, \cdots, \pi_{G} ) $ be the {\em a priori} latent class assignment probabilities where $G$ is the number of latent classes,  and define the $N \times G$ matrix $ p = (p_{ig})$ and $p_{ig}$ is the probability that vertex $V_{i}$ is contained in a hyperedge $e$ with latent class label $g$. The likelihood function can be written as
\begin{eqnarray*}
  L(\mathbf{x}; p, \pi) = \prod_{j=1}^{M} \Big[ \sum_{g=1}^{G} \pi_{g} \prod_{i=1}^{N} p_{ig}^{x_{ij}} (1-p_{ig})^{1-x_{ij}} \Big].
\end{eqnarray*}

By introducing the $M \times G$ latent class membership matrix $\mathbf{z}^{(1)} = (z^{(1)}_{jg})$ where $z^{(1)}_{jg} = 1$ if hyperedge $e_{j}$ has latent class label $g$ and $z^{(1)}_{jg}=0$ otherwise, the complete data likelihood of $\mathbf{x}$ and $\mathbf{z}^{(1)}$ can be expressed as (\ref{lca_lik}).
\begin{eqnarray}
  \label{lca_lik}
  L(\mathbf{x}, \mathbf{z}^{(1)}; p, \pi) = \prod_{j=1}^{M} \prod_{g=1}^{G} 
             \Big[ \pi_{g} \prod_{i=1}^{N} p_{ig}^{x_{ij}} (1-p_{ig})^{1-x_{ij}} \Big]^{z^{(1)}_{jg}}.
\end{eqnarray}

In comparison to the hypergraph beta models introduced in \cite{stasi14}, the LCA model is capable of capturing the clustering and heterogeneity of hyperedges. For example, academic papers can be naturally labelled according to subject areas and conditional on a paper being labelled mathematics, one would expect that the probability a mathematician co-authored the paper is higher than a biologist. The LCA model does not assume an upper bound on the size of hyperedges and can model hyperedges of any size. Furthermore, an efficient expectation maximization algorithm \citep{dempster77} can be easily derived to perform parameter estimation.

\subsection{Extended Latent Class Analysis for Random Hypergraphs}
While the LCA model captures the clustering and heterogeneity of hyperedges in real world data sets, it is quite restrictive in modeling the size of a hyperedge. The size of a hyperedge $e$ with latent class label $g$ follows the Poisson Binomial distribution \citep{wang93} with parameters $ (p_{1g}, \cdots, p_{Ng})$, and with expected value $\sum_{i=1}^{N} p_{ig}$ and variance $\sum_{i=1}^{N} p_{ig} (1-p_{ig})$. As we will illustrate in a few real world data sets, the LCA model underestimates the variation in sizes of hyperedges. Thus, we extend the LCA model by including an additional clustering structure to address this shortcoming. 

We develop the Extended Latent Class Analysis model (ELCA) by introducing an additional clustering to the hyperedges. We assume that the two clustering are independent. We let $ \tau=(\tau_{1},\cdots,\tau_{K})$ be the {\em a priori} additional clustering assignment probabilities where $K$ is the number of additional clusters.  Thus, the probability that a hyperedge has cluster label $g$ and additional cluster label $k$ is given by $ \pi_{g} \tau_{k}$. We define the $N \times G$ matrix $\phi = (\phi_{ig})$ and $K$ dimensional vector $a=(a_{1}, \cdots, a_{K})$ so that the probability that vertex $V_{i}$ is contained in a hyperedge with cluster label $g$ and additional cluster label $k$ is given by $ a_{k} \phi_{ig} $.

Let $\theta = (\pi, \tau, \phi, a)$ denote the model parameters, the likelihood function can be written as
\begin{eqnarray*}
  L(\mathbf{x};\theta) = \prod_{j=1}^{M} \Big[ \sum_{g=1}^{G} \sum_{k=1}^{K} \pi_{g} \tau_{k} \prod_{i=1}^{N} (a_{k} \phi_{ig})^{x_{ij}} (1-a_{k} \phi_{ig})^{1-x_{ij}} \Big].
\end{eqnarray*}

We define the $M \times K$ additional cluster membership matrix $\mathbf{z}^{(2)} = (z_{jk}^{(2)})$ where $z_{jk}^{(2)} = 1$ if hyperedge $e_{j}$ has additional cluster label $k$ and $z_{jk}^{(2)}=0$ otherwise. The complete data likelihood function of $\mathbf{x}$, $\mathbf{z}^{(1)}$ and $\mathbf{z}^{(2)}$ is given as:
\begin{eqnarray}
  \label{glca_lik}
   L(\mathbf{x},\mathbf{z}^{(1)}, \mathbf{z}^{(2)};\theta) = \prod_{j=1}^{M} \prod_{g=1}^{G} \prod_{k=1}^{K} 
  \Big[ \pi_{g} \tau_{k} \prod_{i=1}^{N} (a_{k} \phi_{ig})^{x_{ij}} (1-a_{k}\phi_{ig})^{1-x_{ij}} \Big]^{z^{(1)}_{jg} z^{(2)}_{jk}}.
\end{eqnarray}

We further impose the constraint $a_{K} = 1$ to ensure that the model is identifiable. It is easy to see that the LCA model is a special case of the ELCA model by letting the number of additional clusters $K=1$.

\subsection{Theoretical Properties}
We compare the theoretical properties of the LCA and ELCA models developed above. Proposition~\ref{comp_glca} below shows that the size of hyperedge simulated from the ELCA model has larger variance than simulated from the LCA model.  

\begin{proposition}
\label{comp_glca}
  Suppose we are given the LCA model with parameters $ \{ \pi, p \}$ and the ELCA model with parameters $ \{ \pi, \tau, a, \phi \}$ and $N$ vertices. Suppose the condition $p_{ig} = \phi_{ig} \sum_{k=1}^{K} a_{k} \tau_{k} $ holds for $i=1,\cdots,N$ and $g=1,\cdots, G$. 

 Let $A$ denote the size of a random hyperedge $X_{A}$ generated under the LCA model. Similarly, let $B$ denote the size of a random hyperedge $X_{B}$ generated under the ELCA model. We have the following results.
\begin{eqnarray*}
   E(A) = E(B)
\end{eqnarray*} 
\begin{eqnarray*}
   Var(A) \le Var(B)
\end{eqnarray*}
\end{proposition}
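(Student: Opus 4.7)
The plan is to exploit the law of total expectation/variance by conditioning on the latent class label $g$ (and on the additional cluster label $k$ for the ELCA case) and then reduce everything to the given identification $p_{ig} = \phi_{ig}\sum_{k=1}^{K} \tau_k a_k$. Both hyperedge sizes are, given all latent labels, sums of independent Bernoullis, so their conditional means and variances are immediate; the work is in combining the layers of conditioning.

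For the mean equality, I would first compute $E(A\mid G_A=g) = \sum_{i} p_{ig}$ under LCA and $E(B\mid G_B=g, K_B=k) = a_k\sum_i \phi_{ig}$ under ELCA. Averaging the ELCA conditional mean over $K_B$ yields $E(B\mid G_B=g) = \big(\sum_k \tau_k a_k\big)\sum_i \phi_{ig} = \sum_i p_{ig}$ by the stated assumption; averaging over $G$ with the common weights $\pi_g$ then gives $E(A)=E(B)$.

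For the variance inequality I would apply the law of total variance with respect to $G$ to both models. The outer terms $\mathrm{Var}_G(E[A\mid G])$ and $\mathrm{Var}_G(E[B\mid G])$ are equal by the previous step, so it suffices to show $\mathrm{Var}(B\mid G_B=g)\ge \mathrm{Var}(A\mid G_A=g)$ for every $g$. Under LCA, $\mathrm{Var}(A\mid g)=\sum_i p_{ig}(1-p_{ig})$ directly. Under ELCA, I would apply the law of total variance once more, this time with respect to $K_B$, obtaining
\begin{eqnarray*}
\mathrm{Var}(B\mid g) &=& \sum_{k}\tau_k \sum_i a_k\phi_{ig}(1-a_k\phi_{ig}) \;+\; \mathrm{Var}_{K}\Big(a_K \textstyle\sum_i \phi_{ig}\Big).
\end{eqnarray*}
Expanding both expressions and using $p_{ig}=\phi_{ig}\sum_k\tau_k a_k$ to rewrite the LCA variance, the difference collapses into the product
\begin{eqnarray*}
\mathrm{Var}(B\mid g) - \mathrm{Var}(A\mid g) \;=\; \Big[\Big(\textstyle\sum_i \phi_{ig}\Big)^{2} - \sum_i \phi_{ig}^{2}\Big]\cdot \Big[\sum_k \tau_k a_k^{2} - \big(\sum_k \tau_k a_k\big)^{2}\Big].
\end{eqnarray*}
The first bracket equals $\sum_{i\ne j}\phi_{ig}\phi_{jg}\ge 0$ since all $\phi_{ig}\in[0,1]$, and the second bracket is the variance of $a_K$ under $\tau$, hence also nonnegative. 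Averaging over $g$ with weights $\pi_g$ finishes the proof.

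The main obstacle, and the only point where care is needed, is the bookkeeping in the two cross terms involving $\sum_i \phi_{ig}^{2}$ and $\big(\sum_i \phi_{ig}\big)^{2}$: one must resist substituting $p_{ig}=\phi_{ig}\sum_k\tau_k a_k$ too early, and instead collect the ELCA conditional variance into expressions involving $\sum_k\tau_k a_k$ and $\sum_k \tau_k a_k^{2}$ first. Once the factoring above appears, the nonnegativity of each factor is transparent, so no sharper tool than a variance bound is required.
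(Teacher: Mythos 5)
Your proof is correct, and it organizes the variance comparison differently from the paper. The paper works with the unconditional indicator decomposition $\mathrm{Var}(A)=\sum_i \mathrm{Var}(A_i)+2\sum_{i<j}\mathrm{Cov}(A_i,A_j)$, computes each marginal variance and covariance by summing over the latent labels, and observes that the diagonal terms cancel so that the whole difference sits in the off-diagonal covariances, which factor as $2\sum_{i<j}\sum_g \phi_{ig}\phi_{jg}\big(\sum_k a_k^2\tau_k-(\sum_k a_k\tau_k)^2\big)\pi_g$, nonnegative by Jensen. You instead apply the law of total variance twice (first on $G$, then on $K$), note that the between-cluster terms agree because the conditional means agree, and reduce to the per-cluster inequality $\mathrm{Var}(B\mid g)\ge\mathrm{Var}(A\mid g)$; your factored difference $\big[(\sum_i\phi_{ig})^2-\sum_i\phi_{ig}^2\big]\big[\sum_k\tau_k a_k^2-(\sum_k\tau_k a_k)^2\big]$ checks out and is exactly the paper's expression before averaging over $g$, since $(\sum_i\phi_{ig})^2-\sum_i\phi_{ig}^2=2\sum_{i<j}\phi_{ig}\phi_{jg}$. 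Both routes ultimately rest on the same fact, the nonnegativity of the variance of $a_K$ under $\tau$. What your version buys is a strictly stronger intermediate statement (the variance inequality holds conditionally on each cluster $g$, not just marginally) and a cleaner interpretation of the excess variance as $\mathrm{Var}_\tau(a_K)$ scaled by the cross-products of the $\phi_{ig}$; what the paper's version buys is an explicit display of exactly which covariance terms differ between the two models. Your cautionary remark about not substituting $p_{ig}=\phi_{ig}\sum_k\tau_k a_k$ too early is well taken and is the only delicate point in either derivation.
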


\begin{proof}
  The proof is straightforward and is given in the Appendix.
\end{proof}

We now let $f_{N}(y)$ be the probability mass functions of the size of a random hyperedge simulated from a $G$ cluster LCA model. Similarly, we let $h_{N}(y)$ be the probability mass function of the size of a random hyperedge simulated from the ELCA model with $G$ clusters and $K$ additional clusters. The following result can be derived.

\begin{proposition}
 \label{conv_lca}
\begin{enumerate}
\item
  Under the specifications of a LCA model with parameters $\pi=(\pi_{1},\cdots,\pi_{G})$ and $ \{ p_{ig} \}_{i=1,\cdots,N, g=1,\cdots,G} $, and suppose the following conditions hold for $g=1,\cdots,G$,
  \begin{eqnarray*}
     \lambda_{N}^{(g)} = \sum_{i=1}^{N} p_{ig} \rightarrow \lambda^{(g)} > 0 
  \end{eqnarray*}
  \begin{eqnarray*}
     \sum_{i=1}^{N} p_{ig}^{2} \rightarrow 0 
  \end{eqnarray*}
as $N \rightarrow \infty$. We have
\begin{eqnarray*}
   f_{N}(y) \rightarrow \sum_{g=1}^{G} \pi_{g} \frac{ e^{ -\lambda^{(g)} } (\lambda^{(g)})^{y} }{y!}
\end{eqnarray*}
That is, the distribution of the size of a random hyperedge converges to a mixture of Poisson distribution with $G$ components.

\item  Under the specification of a ELCA model with parameters $\pi=(\pi_{1},\cdots,\pi_{G})$, $ \tau=( \tau_{1}, \cdots, \tau_{K})$, $a=(a_{1}, \cdots, a_{K})$, and $ \{ \phi_{ig} \}_{i=1,\cdots,N, g=1,\cdots,G} $.  Further suppose the following conditions hold for $g=1,\cdots,G$, and $k=1, \cdots, K$.
\begin{eqnarray*}
     \lambda_{N}^{(g,k)} = \sum_{i=1}^{N} \phi_{ig} a_{k} \rightarrow \lambda^{(g,k)} > 0 
  \end{eqnarray*}
  \begin{eqnarray*}
     \sum_{i=1}^{N} \phi_{ig}^{2} a_{k}^{2} \rightarrow 0 
  \end{eqnarray*}
as $N \rightarrow \infty$. We have

\begin{eqnarray*}
   h_{N}(y) \rightarrow \sum_{g=1}^{G} \sum_{k=1}^{K} \pi_{g} \tau_{k} \frac{ e^{ - \lambda^{(g,k)} } ( \lambda^{(g, k)})^{y} }{y!}
\end{eqnarray*}

That is, the distribution of the size of a random hyperedge converges to a mixture of Poisson distribution with $G \times K$ components.
\end{enumerate}
\end{proposition}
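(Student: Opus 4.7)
The plan is to condition on the latent class assignment(s) of the random hyperedge and reduce both parts to the classical Poisson convergence theorem for sums of independent, non-identically distributed Bernoulli random variables (Le Cam's theorem).

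First I would handle part (1). Conditional on a random hyperedge being assigned latent class $g$, the indicator vector $(X_1,\ldots,X_N)$ consists of independent Bernoulli variables with success probabilities $p_{1g},\ldots,p_{Ng}$, so the conditional size $S_N^{(g)} = \sum_{i=1}^N X_i$ follows a Poisson binomial distribution. By the law of total probability,
$$f_N(y) \;=\; \sum_{g=1}^G \pi_g \, P\!\left( S_N^{(g)} = y \right).$$
The single analytic input is Le Cam's inequality, which for independent Bernoullis with parameters $p_{1g},\ldots,p_{Ng}$ and sum $\lambda_N^{(g)}$ yields
$$\sum_{y=0}^{\infty} \left| P\!\left(S_N^{(g)} = y\right) - \frac{e^{-\lambda_N^{(g)}}(\lambda_N^{(g)})^y}{y!} \right| \;\le\; 2 \sum_{i=1}^N p_{ig}^2.$$
Under the assumption $\sum_i p_{ig}^2 \to 0$ the right-hand side vanishes, so $S_N^{(g)}$ converges in total variation, and in particular pointwise in its PMF, to $\mathrm{Poisson}(\lambda_N^{(g)})$. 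Combined with $\lambda_N^{(g)} \to \lambda^{(g)} > 0$ and continuity of the Poisson PMF in its parameter, this gives $P(S_N^{(g)} = y) \to e^{-\lambda^{(g)}}(\lambda^{(g)})^y / y!$ for every $y$. Since $g$ ranges over a finite set with weights $\pi_g$ independent of $N$, the pointwise convergence transfers termwise to $f_N(y)$, giving the stated mixture-of-Poisson limit.

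For part (2), I would condition simultaneously on both cluster labels $(g,k)$. Given both labels, the indicators remain independent, now Bernoulli with parameters $a_k \phi_{ig}$, and
$$h_N(y) \;=\; \sum_{g=1}^G \sum_{k=1}^K \pi_g \tau_k \, P\!\left( S_N^{(g,k)} = y \right).$$
The Le Cam bound applied to these probabilities gives error $\le 2 \sum_i a_k^2 \phi_{ig}^2$, which matches exactly the hypothesis $\sum_i \phi_{ig}^2 a_k^2 \to 0$. This yields $P(S_N^{(g,k)} = y) \to e^{-\lambda^{(g,k)}}(\lambda^{(g,k)})^y / y!$ for each $(g,k)$, and summing over the finite $G\times K$ grid with fixed weights $\pi_g \tau_k$ completes the argument.

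The only substantive step is the invocation of Le Cam's theorem; everything else is bookkeeping. There is no real obstacle, but it is worth noting that the hypothesis $\sum_i p_{ig}^2 \to 0$ (respectively $\sum_i \phi_{ig}^2 a_k^2 \to 0$) is precisely the quantity controlled by the Le Cam bound and also implies $\max_i p_{ig} \to 0$, which is the usual ``small probabilities'' regime ensuring a Poisson rather than Gaussian limit.
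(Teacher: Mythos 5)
Your proposal is correct and follows essentially the same route as the paper: condition on the latent cluster label(s), invoke a Poisson limit theorem for the Poisson binomial distribution under the hypotheses $\sum_i p_{ig} \rightarrow \lambda^{(g)}$ and $\sum_i p_{ig}^2 \rightarrow 0$, and then marginalize over the finite set of clusters. The only cosmetic difference is that you cite Le Cam's inequality (which gives an explicit total-variation bound) where the paper cites Theorem~3 of \cite{wang93}; both deliver the same conditional Poisson limit.
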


\begin{proof}
  Conditional on the event that a random hyperedge is generated from cluster $g$, \cite[][Theorem 3]{wang93} implies that
  \begin{eqnarray*}
     f_{N}(y) \rightarrow \frac{e^{-\lambda^{(g)}} (\lambda^{(g)})^{y} }{ y! }
  \end{eqnarray*}
Part 1 result follows by marginalizing over the $G$ clusters. The second part of the proposition can be proved similarly.
\end{proof}

Proposition~\ref{conv_lca} implies that the size distribution of a random hyperedge generated under the ELCA model is far more flexible than for the LCA model.

\subsection{Co-Clustering}
The concept of having two clustering structure is related to co-clustering or block clustering. In co-clustering, the objective is to simultaneously cluster rows and columns of a data matrix. In particular, mixture models have been proposed with EM algorithms developed in the context of co-clustering \citep{govaert03, govaert08}. Co-clustering has also received significant attention in various application such as text mining, bioinformatics and recommender systems \citep{dhillon03, cheng00, george05}. In comparison, we aim to obtain two types of clustering structure for the rows of a data matrix.

In the work of \cite{rau15}, a Poisson mixture model was proposed for clustering of digital gene expression to discover groups of co-expressed genes, where observations of biological entities under different conditions are collected. In order to model the variations in overall expression level among biological entities, a scaling parameter is introduced for each entity. In comparison, we explicitly model the size of random hyperedge using clustering which results in a more parsimonious model structure.

\section{EM Algorithm}
\label{sec_em}

We estimate the parameters $\theta = (\pi, \tau, \phi, a) $ of the ELCA model using an EM algorithm \citep{dempster77} which is a popular method in fitting mixture models. The E-step of the EM algorithm involves computing the expected value of the logarithm of the complete data likelihood (\ref{glca_lik}) with respect to the distribution of the unobserved $\mathbf{z}^{(1)}$ and $\mathbf{z}^{(2)}$ given the current estimates. The M-step involves maximizing the expected complete data log-likelihood.

Taking logarithm of the complete data likelihood in (\ref{glca_lik}), we obtain the complete data log-likelihood function below.
\begin{eqnarray}
  \label{glca_ll}
   \log L(\mathbf{x},\mathbf{z};\theta) = \sum_{j=1}^{M} \sum_{g=1}^{G} \sum_{k=1}^{K} Z^{(1)}_{jg} Z^{(2)}_{jk} \Big[ \log \pi_{g} + \log \tau_{k} + \sum_{i=1}^{N} x_{ij} \log(a_{k}) +  \nonumber
\\ \log(\phi_{ig}) + (1-x_{ij}) \log(1-a_{k} \phi_{ig}) \Big]. 
\end{eqnarray}

For the E-step, we need to evaluate the expectation of (\ref{glca_ll}) conditional on data $x$ and current parameter estimates $\theta^{(t)}$.
\begin{eqnarray*}
  Q(\theta|\theta^{(t)}) := E(\log L(x,z;\theta)|x,\theta^{(t)})
\end{eqnarray*}
That is, we need to evaluate the expectation $ \widehat{Z^{(1)}_{jg} Z^{(2)}_{jk}} := E(Z^{(1)}_{jg} Z^{(2)}_{jk}|x, \theta^{(t)})$. We have that
\begin{eqnarray}
\label{update_e_step}
  E(Z^{(1)}_{jg} Z^{(2)}_{jk} | x, \theta^{(t)}) &=& Pr(Z^{(1)}_{jg} = Z^{(2)}_{jk} = 1 | x, \theta^{(t)})   \nonumber \\  
               &= & \frac{ \pi^{(t)}_{g} \tau^{(t)}_{k} \Big[ \prod_{i=1}^{N} (a_{k} \phi_{ig})^{x_{ij}} (1-a_{k}\phi_{ig})^{1-x_{ij}} \Big]}{
  \sum_{g=1}^{G} \sum_{k=1}^{K} \pi^{(t)}_{g} \tau^{(t)}_{k} \Big[ \prod_{i=1}^{N} (a_{k} \phi_{ig})^{x_{ij}} (1-a_{k}\phi_{ig})^{1-x_{ij}} \Big] }.
\end{eqnarray}
In particular, the E-step has a computational complexity of ${\cal O}(N)$ for each pair $(g,k)$. While the E-step of the EM algorithm is straightforward, the M-step involves complicated maximization. Thus, we use the ECM algorithm \citep{meng93} which replaces the complex M-step by a series of simpler conditional maximizations. The conditional maximizations with respect to the parameters $\phi$ and $a$ do not have closed form solutions. We resort to the MM algorithm \citep{lange00, hunter04} which works by lower bounding the objective function by a minorizing function and then maximizing the minorizing function. Details of the M-step are given in the appendix and the EM algorithm is summarized in Algorithm \ref{algo_em}. In particular, we note that the computational complexity for maximizing $ \phi_{ig} $ and $ a_{k} $ are given by ${\cal O}(N_{iter}MK)$ and ${\cal O}(N_{iter}MGN)$, respectively, where $N_{iter}$ is the number of iterations required for the MM algorithm.

\begin{algorithm}
  \caption{EM Algorithm \newline \textbf{Input}: $ \mathbf{x}, G, K, tol $  \newline \textbf{Output}:  $ \hat{\phi}, \hat{a}, \hat{\pi}, \hat{\tau}, \mathbf{\hat{z}}^{(1)}, \mathbf{\hat{z}}^{(2)}$} 
  \begin{algorithmic}[1]
        \State $conv = False$
        \State Random initialization of $\phi, a, \pi, \tau$
        \While{$conv = False$}
            \State Do the E-step according to (\ref{update_e_step})
            \For{$i=1,\cdots,N$}
                \For{$g=1,\cdots,G$}
                    \State Update $ \phi_{ig} $ according to (\ref{update_phi})
                \EndFor
            \EndFor
            \For{$k=1,\cdots,K-1$}
                  \State Update $ a_{k} $ according to (\ref{update_a})
            \EndFor
            \For{$g=1,\cdots,G$}
                  \State Update $ \pi_{g} $ according to (\ref{update_pi})
            \EndFor
            \For{$k=1,\cdots,K$}
                  \State Update $ \tau_{k} $ according to (\ref{update_tau})
            \EndFor
            \State Evaluate Change in log-likelihood $ \Delta_{loglik} $ resulting from parameter updates
            \If{ $ \Delta_{loglik} < tol $}
                 \State $conv = True$
            \EndIf
        \EndWhile
   \end{algorithmic}
   \label{algo_em}
\end{algorithm}

\section{Model Selection}

\subsection{Cross Validated Likelihood}
\label{sec_cv}
Given a fixed model, the cross validated likelihood method \citep{smyth00} works by repetitively partitioning the observations into two disjoint sets, one of which is used to fit the model and obtain estimates of model parameters by maximizing the log-likelihood, and the other is for evaluating the model by computing its log-likelihood.

For each $G$ and $K$, we define ${\cal M}_{G,K}$ to be the ELCA model with $G$ clusters and $K$ additional clusters. To apply the cross validated likelihood method, we randomly partition the hyperedges $\mathbf{x}$ into two sets $\mathbf{x}^{(train)}$ and $\mathbf{x}^{(test)}$ where each hyperedge in $\mathbf{x}$ is included in $\mathbf{x}^{(train)}$ with probability $q$. In our applications we set $q = 0.7$. The EM algorithm developed in section \ref{sec_em} is then used to fit $\mathbf{x}^{(train)}$ and obtain the parameter estimates $ \hat{\theta} = (\hat{\pi}, \hat{\tau}, \hat{\phi}, \hat{a}) $. We then compute the log-likelihood of $\mathbf{x}^{(test)}$ under the estimated parameters $\hat{\theta}$ and obtain the test log-likelihood $L^{(test)}$. The above procedure is then repeated $N_{cv}$ times and the estimated cross validated log-likelihood is obtained by averaging over $L^{(test)}$. The procedure above is summarized in Algorithm \ref{algo_cv}.

We perform a greedy search for the optimal combination of $G$ and $K$ which produces the largest estimated cross validated log-likelihood $\hat{L}_{cv}^{G, K}$. Starting with one cluster and one additional cluster, an additional cluster is then successively added to the model until the estimated cross validated log-likelihood does not increase. At this stage, we then increment the number of clusters $G$ by 1 and the above procedure is repeated provided that $\hat{L}_{cv}^{G,1} > \hat{L}_{cv}^{G-1,1}$. The greedy search algorithm is summarized in Algorithm \ref{algo_greedy}. The greedy search can be computationally intensive when the search space for $(G,K)$ and the number of cross validation $N_{cv}$ are large. 

\begin{algorithm}
  \caption{Estimated Cross Validated Log-likelihood \newline \textbf{Input}: $ \mathbf{x}, G, K, N_{cv}, q$  \newline \textbf{Output}:  $ \hat{L}_{cv}$} 
  \begin{algorithmic}[1]
     \For{\texttt{ $ n = 1,\cdots, N_{cv}$}}
         \State \texttt{ $\mathbf{x}^{(train)} = \emptyset$, $\mathbf{x}^{(test)} = \emptyset$ }
         \For{\texttt{ $j=1, \cdots, M$}}
             \State \texttt{ $ u \sim Unif(0,1) $}
             \If{$u < q$}
                 \State \texttt{ $\mathbf{x}^{(train)} = \mathbf{x}^{(train)} \cup x_{j} $}
             \Else
                 \State \texttt{$\mathbf{x}^{(test)} = \mathbf{x}^{(test)} \cup x_{j} $}
             \EndIf
         \EndFor
         \State \texttt{ $ \hat{\theta} = \argmax_{\theta} \{ L(\mathbf{x}^{(train)}; \theta) \} $ }  
         \State \texttt{ $ \hat{L}_{n} = L(\mathbf{x}^{(test)}; \theta) $ }
     \EndFor
     \State \texttt{  $ \hat{L}_{cv} = \frac{1}{N_{cv}} \sum_{n=1}^{N_{cv}} \hat{L}_{n} $ }
   \end{algorithmic}
   \label{algo_cv}
\end{algorithm}

\begin{algorithm}
  \caption{Greedy Search For Model Selection \newline \textbf{Input}: $ \mathbf{x} $ \newline \textbf{Output}: $ G_{opt}, K_{opt} $ }
  \begin{algorithmic}[1]
       \State $ G_{opt} = 1 $
       \State $ K_{opt} = 1 $
       \State $stop_{G} = False$
       \State Obtain $ \hat{l}_{cv}^{G_{opt},K_{opt}}$ using Algorithm \ref{algo_cv}
       \While{$stop_{G} = False$}
           \State $stop_{K} = False$
           \While{$stop_{K} = False$}   
               \State $K_{test} = K_{opt} + 1$
               \State Obtain $ \hat{l}_{cv}^{G_{opt},K_{test}} $ using Algorithm \ref{algo_cv}
               \If{ $\hat{l}_{cv}^{G_{opt},K_{test}} > \hat{l}_{cv}^{G_{opt},K_{opt}} $}
                    \State $K_{opt} = K_{test}$
               \Else
                    \State $stop_{K} = True$
               \EndIf
            \EndWhile
            \State $G_{test} = G_{opt} + 1$
            \State Obtain $\hat{l}_{cv}^{G_{test},1}$ using Algorithm \ref{algo_cv}
            \If{$\hat{l}_{cv}^{G_{test},1} > \hat{l}_{cv}^{G_{opt},1}$}
               \State $ G_{opt} = G_{test} $
            \Else
               \State $Stop_{G} = True$
            \EndIf
       \EndWhile
   \end{algorithmic}
  \label{algo_greedy}
\end{algorithm}

\section{Applications}

\subsection{Star Wars Movie Scenes}

Our first application is modeling co-appearance of the main characters in the scenes of the movie ``Star Wars: A New Hope''. We collected the scripts of the movie from The Internet Movie Script Database \footnote{\label{starwars}Movie script data freely available at {\tt https://www.imsdb.com/} } and constructed a hypergraph for the eight main characters. We define each scene in the movie as a hyperedge with a total of 178 hyperedges, and a character is contained in the scene if he/she speaks in the scene. 

We first performed model selection using the greedy search algorithm and the cross validated likelihood method presented in Section \ref{sec_cv} to select the optimal number of clusters and additional clusters for the ELCA model. The results of the greedy search are provided in Table~\ref{ms_starwars} and the model with 3 clusters and 2 additional clusters is selected. 

The results from fitting the ELCA model with $G=3$ and $K=2$ are provided in Table~\ref{starwar_para_est} and Table~\ref{starwar_phi}. We can see the variation in the size of hyperedges from the parameter estimates $\hat{a}$ and $\hat{\tau}$ with the majority ($81 \%$) of hyperedges having size much smaller than the rest of the hyperedges. Thus, one can deduce that a small proportion of the movie scenes have far more characters.

\begin{table}[hbtp]
 \caption{Estimates of $ \pi $, $ \tau $ and $a$ from fitting the ELCA model with 3 clusters and 2 additional clusters for the Star Wars data set}
\begin{center}
\begin{tabular}{| c | c |}
 \hline
  $ \hat{\pi} $ & $(0.40, 0.40, 0.20)$ \\ \hline
  $ \hat{\tau} $ & $(0.81, 0.19)$ \\ \hline
  $ \hat{a} $ & $(0.41, 1)$ \\ \hline
\end{tabular}
\end{center}
  \label{starwar_para_est}
\end{table}

\begin{table}[hbtp]
\caption{Estimates of $\{ \phi_{ig} \}$ from fitting the ELCA model with 3 clusters and 2 additional clusters for the Star Wars data set}
\begin{center}
\begin{tabular}{|c|c|c|c|}
   \hline
Character & Cluster 1 & Cluster 2 & Cluster 3  \\ \hline
   Wedge &  0.18 & 0.00 & 0.36 \\ \hline
   Han &  0.00 & 1.00 & 0.00 \\ \hline
   Luke & 1.00 & 1.00 & 0.00  \\ \hline
   C-3PO &  0.75 & 0.30 & 0.00\\ \hline
   Obi-Wan &  0.00 & 0.00 & 1.00 \\ \hline
   Leia & 0.12 & 0.48 & 0.07 \\ \hline
   Biggs & 0.31 & 0.00 & 0.28 \\ \hline
   Darth Vader & 0.19 & 0.35 & 0.06 \\ \hline
\end{tabular}
\end{center}
\label{starwar_phi}
\end{table}

The estimates $\hat{\phi}$ in Table~\ref{starwar_phi} reveal interesting clustering structure for the 8 main characters in the movie. For example, the lead character ``Luke'' has a strong tendency to appear in the two largest clusters. On the other hand, it is extremely unlikely for ``Obi-Wan'' and ``Han'' appear in the same scene. 

\begin{figure}
  \centering
  \includegraphics[width=1.0\textwidth]{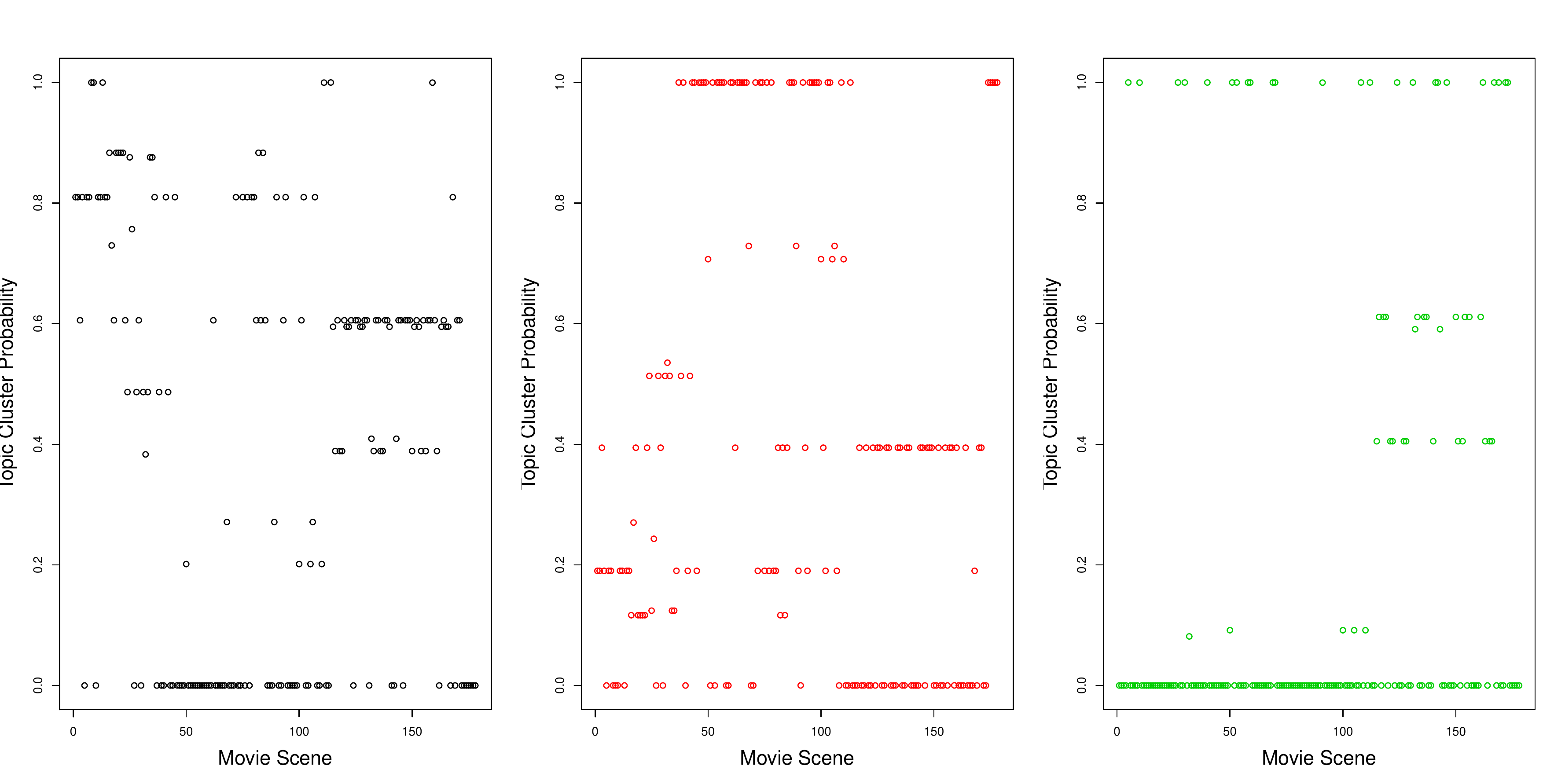}
  \caption{Probability of clusters for movie scenes in Star Wars data set}
    \label{fig:topic_clusters_starwar}
\end{figure}

The estimated cluster assignment probabilities from the EM algorithm for each movie scene in the Star Wars movie are shown in chronological order in Figure~\ref{fig:topic_clusters_starwar}. We can see from the plot that scenes in the early part of the movie are mainly associated with cluster 1, while cluster 2 contains most of the scenes from roughly scene 40 to scene 100. We can deduce from this, for example, that the character ``Han'' is very active in the middle part of the movie. On the other hand, there does not appear to be any obvious pattern for the third cluster. The clustering for many early and late movie scenes is relatively uncertain, as shown in the plot.

\begin{figure}
  \centering
  \includegraphics[width=0.5\textwidth]{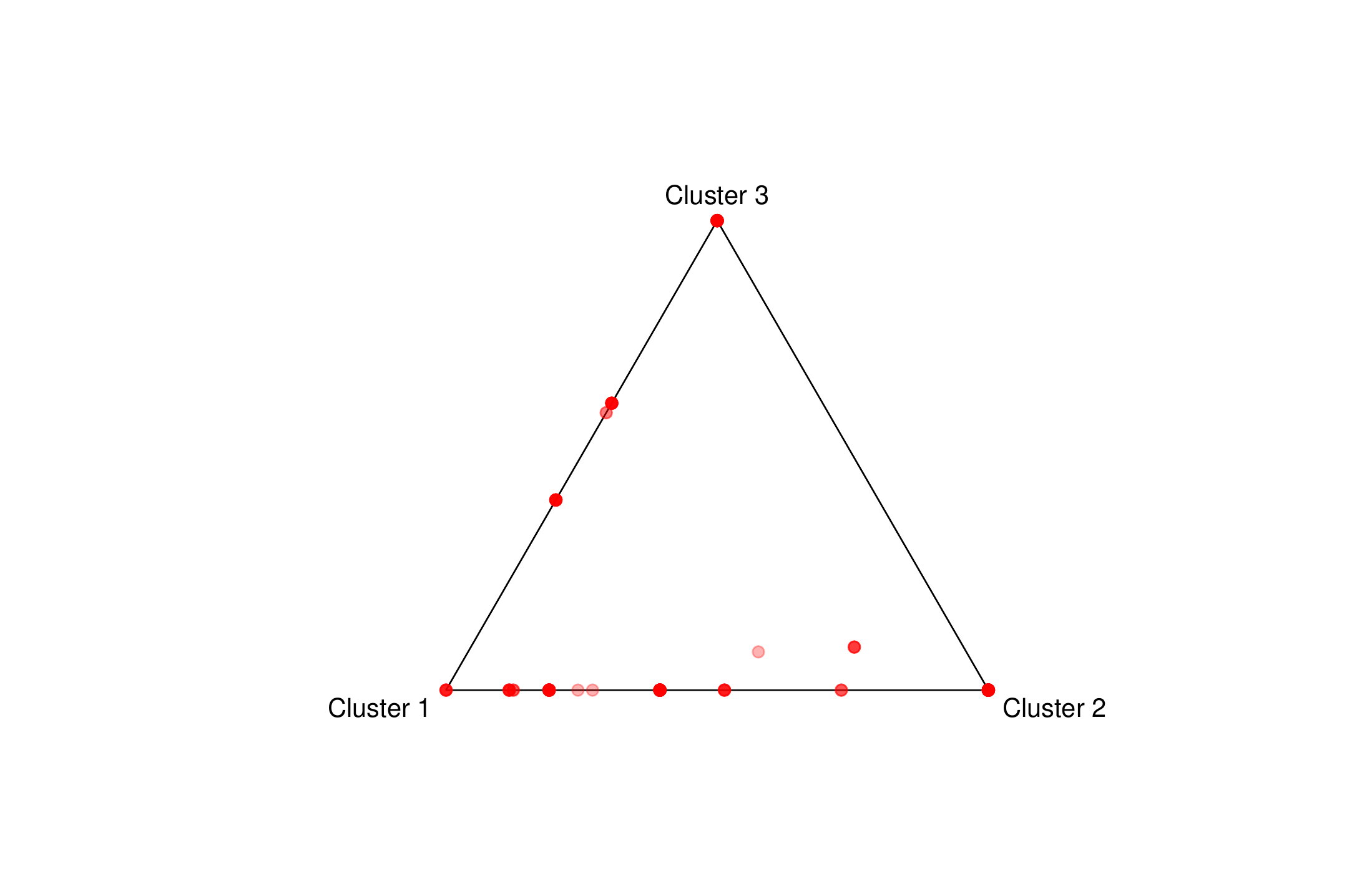}
  \caption{Ternary plot of the {\em a posteriori} group membership probabilities for the scenes in the Star Wars data set}
    \label{fig:topic_clusters_ternary_starwar}
\end{figure}

The uncertainties in clustering are also illustrated in a ternary plot in Figure~\ref{fig:topic_clusters_ternary_starwar}. Each dot in the plot represents a movie scene, and the three corners of the plot represent the three clusters. The closer the dot is to the corner, the higher probability that the corresponding movie scene belongs to the corresponding cluster. The ternary plot in Figure~\ref{fig:topic_clusters_ternary_starwar} shows significant uncertainties in clustering a number of movie scenes into the first two clusters. This is reasonable since for a number of actors including the lead actor ``Luke'', the probabilities of scene appearance are similar for the first two clusters.

\subsection{Lady Gaga Concerts 2014}
As a second application of the ELCA model, we collected the list of songs that Lady Gaga performed in all concerts in 2014 \footnote{\label{gaga} The Lady Gaga setlist data are available at: {\tt http://www.setlist.fm/}  }. The data set contains 96 concerts with a total of 51 distinct songs performed. The hypergraph is constructed by defining each concert as a hyperedge and each song as a vertex. A vertex is contained in a hyperedge if the corresponding song is performed in the corresponding concert. The results of performing model selection using the approach of Section~\ref{sec_cv} is presented in Table~\ref{ms_gaga}. We can see from Table~\ref{ms_gaga} that ELCA models with more than one additional cluster significantly out-perform standard latent class analysis models.

The model with 5 clusters and 2 additional clusters was chosen and fitted to the data set. The parameter estimates $\hat{\pi}$, $\hat{\tau}$ and $\hat{a}$ are given in Table~\ref{para_est_gaga}. We can deduce from $\hat{a}$ and $\hat{\tau}$ that there are a small number of very short concerts of length approximately 14\% of the rest of the ``full'' concerts.

\begin{table}[hbtp]
 \caption{ Estimates of $ \pi $, $ \tau $ and $a$ from fitting the ELCA model with 5 clusters and 2 additional clusters for Lady Gaga concerts 2014 data set}
\begin{center}
\begin{tabular}{| c | c |}
 \hline
$ \hat{\pi} $ & $(0.23, 0.31, 0.23, 0.12, 0.05)$ \\ \hline
  $ \hat{\tau} $ & $(0.11, 0.89)$ \\ \hline
  $ \hat{a} $ & $(0.14, 1)$ \\ \hline
\end{tabular}
\end{center}
  \label{para_est_gaga}
\end{table}

Table~\ref{gaga_phi} shows the parameter estimates $\hat{\phi}$ where the popularity of the 51 songs across 5 clusters are shown. One can see a small number of extremely popular songs which tend to be performed in most concerts, such as ``Paparazzi'', ``Bad Romance'', ``Born This Way'', ``G.U.Y'' and ``Just Dance''. Among the least performed songs, ``Fashion!'' and ``Cake Like Lady Gaga'' tend to be performed in the same concert, while ``Lush Life'', ``It Don't Mean a Thing (If It Ain't Got That Swing)'' and ``But Beautiful'' are more likely to be performed in the same concert. 

\begin{table}[hbtp]
\caption{Estimates of $\{ \phi_{ig} \}$ from fitting the ELCA model with 5 clusters and 2 additional clusters to the Lady Gaga concerts 2014 data set}
 {\scriptsize \resizebox{\textwidth}{!}{
  \begin{tabular}{|c|c|c|c|c|c|}
   \hline
Songs & Cluster 1 & Cluster 2 & Cluster 3 & Cluster 4 & Cluster 5  \\ \hline
Monster for Life	&	0.04	&	0.00	&	0.00	&	0.00	&	0.00	\\ \hline
Fashion!	&	0.00	&	0.00	&	0.68	&	0.00	&	0.00	\\ \hline
Paparazzi	&	1.00	&	1.00	&	1.00	&	1.00	&	1.00	\\ \hline
Bad Romance	&	1.00	&	1.00	&	1.00	&	1.00	&	1.00	\\ \hline
What's Up	&	0.12	&	0.00	&	0.00	&	0.00	&	0.44	\\ \hline
Sophisticated Lady	&	0.00	&	0.00	&	0.00	&	0.00	&	0.31	\\ \hline
Dance in the Dark	&	0.00	&	0.00	&	0.00	&	0.00	&	0.44	\\ \hline
Born This Way	&	1.00	&	1.00	&	1.00	&	1.00	&	1.00	\\ \hline
Judas	&	1.00	&	0.87	&	0.00	&	0.00	&	0.00	\\ \hline
Partynauseous	&	1.00	&	1.00	&	1.00	&	0.00	&	1.00	\\ \hline
Yo and I	&	1.00	&	0.13	&	0.00	&	1.00	&	1.00	\\ \hline
I Will Always Love You	&	0.00	&	0.03	&	0.00	&	0.00	&	0.00	\\ \hline
Monster	&	0.00	&	0.00	&	0.00	&	1.00	&	0.00	\\ \hline
Bang Bang (My Baby Shot Me Down)	&	0.72	&	0.00	&	0.00	&	0.00	&	1.00	\\ \hline
The Queen	&	0.00	&	0.00	&	0.09	&	0.00	&	0.00	\\ \hline
Dope	&	1.00	&	0.60	&	0.00	&	1.00	&	1.00	\\ \hline
Jewels N' Drugs	&	1.00	&	1.00	&	1.00	&	0.00	&	1.00	\\ \hline
Hair	&	0.00	&	0.00	&	0.05	&	0.00	&	0.00	\\ \hline
Mary Jane Holland	&	1.00	&	0.00	&	0.95	&	0.00	&	1.00	\\ \hline
G.U.Y.	&	1.00	&	1.00	&	1.00	&	1.00	&	1.00	\\ \hline
MANiCURE	&	1.00	&	1.00	&	1.00	&	0.00	&	1.00	\\ \hline
Lush Life	&	0.00	&	0.00	&	0.00	&	0.10	&	0.51	\\ \hline
It Don't Mean a Thing (If It Ain't Got That Swing)	&	0.00	&	0.00	&	0.00	&	0.00	&	0.49	\\ \hline
You've Got a Friend	&	0.00	&	0.00	&	0.00	&	0.12	&	0.00	\\ \hline
The Edge of Glory	&	1.00	&	1.00	&	0.04	&	0.00	&	1.00	\\ \hline
Donatella	&	1.00	&	1.00	&	1.00	&	0.00	&	1.00	\\ \hline
But Beautiful	&	0.00	&	0.00	&	0.00	&	0.00	&	0.31	\\ \hline
Do What U Want	&	1.00	&	1.00	&	1.00	&	0.00	&	1.00	\\ \hline
Gypsy	&	1.00	&	1.00	&	1.00	&	0.00	&	1.00	\\ \hline
Applause	&	1.00	&	1.00	&	1.00	&	1.00	&	1.00	\\ \hline
Marry the Night	&	0.04	&	0.00	&	0.05	&	0.00	&	0.44	\\ \hline
Sexxx Dreams	&	1.00	&	0.97	&	1.00	&	1.00	&	1.00	\\ \hline
Another One Bites the Dust	&	0.00	&	0.00	&	0.00	&	0.12	&	0.00	\\ \hline
Just Dance	&	1.00	&	1.00	&	1.00	&	1.00	&	1.00	\\ \hline
Cake Like Lady Gaga	&	0.00	&	0.00	&	0.68	&	0.00	&	0.00	\\ \hline
I Can't Give You Anything but Love, Baby	&	0.00	&	0.03	&	0.00	&	0.00	&	0.49	\\ \hline
Black Jesus  Amen Fashion	&	0.00	&	0.00	&	0.00	&	1.00	&	0.00	\\ \hline
Ratchet	&	1.00	&	1.00	&	1.00	&	0.00	&	1.00	\\ \hline
Aura	&	1.00	&	0.87	&	0.91	&	0.00	&	0.00	\\ \hline
Poker Face	&	1.00	&	1.00	&	1.00	&	1.00	&	1.00	\\ \hline
Venus	&	1.00	&	1.00	&	1.00	&	0.00	&	1.00	\\ \hline
If I Ever Lose My Faith in You	&	0.00	&	0.00	&	0.00	&	0.12	&	0.00	\\ \hline
Bell Bottom Blues	&	0.04	&	0.00	&	0.00	&	0.00	&	0.00	\\ \hline
Whole Lotta Love	&	0.04	&	0.00	&	0.00	&	0.00	&	0.00	\\ \hline
Telephone	&	1.00	&	1.00	&	1.00	&	0.00	&	1.00	\\ \hline
Willkommen	&	0.16	&	0.00	&	0.00	&	0.00	&	0.00	\\ \hline
Brooklyn Nights	&	0.00	&	0.03	&	0.00	&	0.00	&	0.00	\\ \hline
Alejandro	&	1.00	&	1.00	&	1.00	&	0.00	&	1.00	\\ \hline
ARTPOP	&	1.00	&	1.00	&	1.00	&	1.00	&	1.00	\\ \hline
I've Got a Crush on You	&	0.00	&	0.00	&	0.05	&	0.00	&	0.00	\\ \hline
Swine	&	1.00	&	0.97	&	1.00	&	0.00	&	1.00	\\ \hline
  \end{tabular}
}}
\label{gaga_phi}
\end{table}

The estimated cluster assignment probabilities for each concert performed by Lady Gaga in 2014 are shown in chronological order in Figure~\ref{fig:topic_clusters_gaga2014}. There is a strong association between clusters and time of the year. For example, the first 30 concerts performed in 2014 are mainly associated with cluster 1 where songs such as ``Bad Romance'', ``Judas'' and ``Aura'' are among the most popular ones. On the other hand, the next 30 concerts are strongly associated with cluster 2 where songs such as ``The Edge of Glory'', ``Venus'' and ``Ratchet'' are popular. The last 10 concerts of 2014 are mostly clustered into cluster 4 where songs such as ``Yo and I'', ``Monster'' and ``Black Jesus Amen Fashion'' are frequently performed.

\begin{figure}
  \centering
  \includegraphics[width=1.0\textwidth]{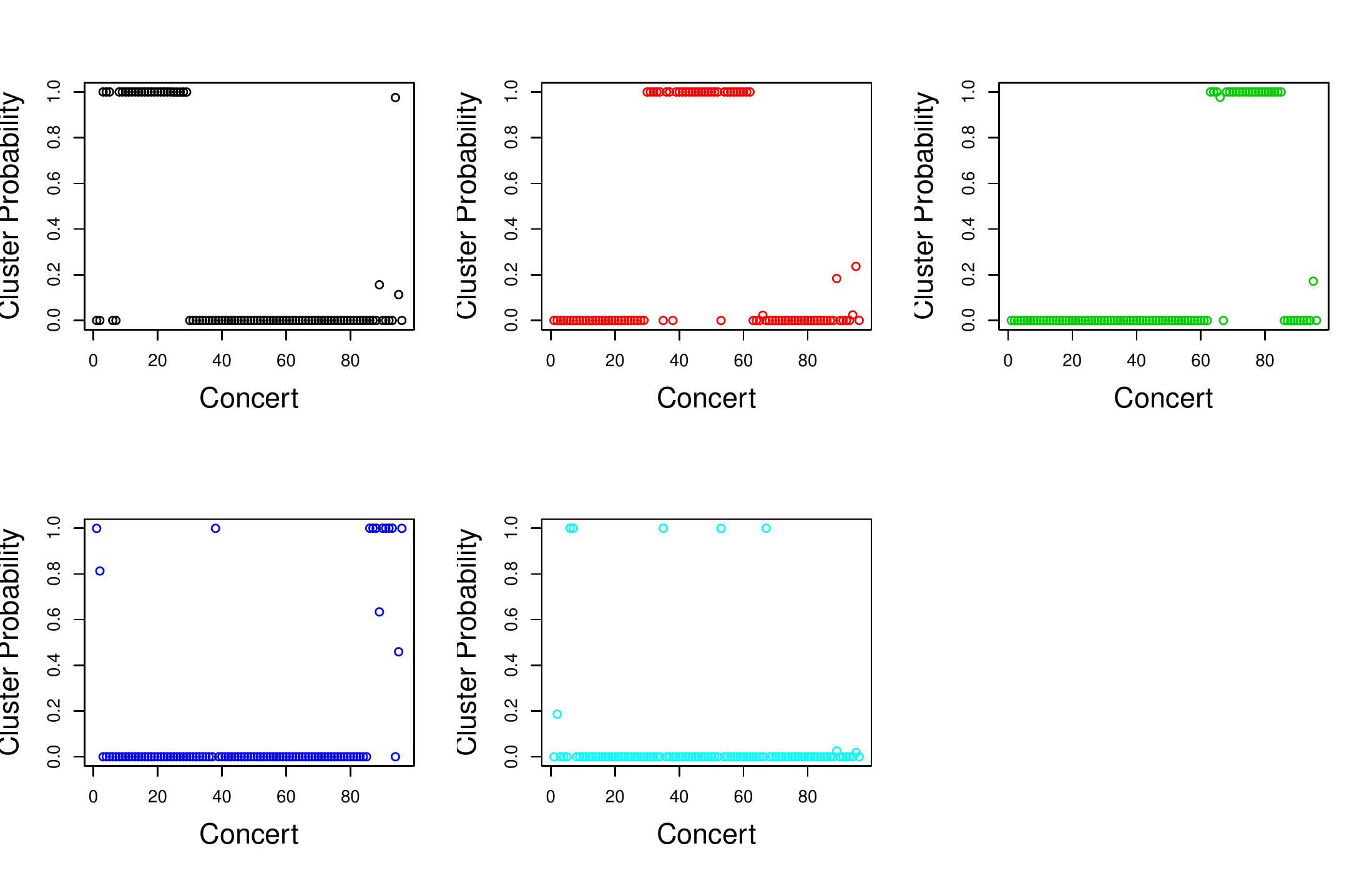}
  \caption{Probability of clusters for concerts in Lady Gaga Concerts 2014 data set}
    \label{fig:topic_clusters_gaga2014}
\end{figure}

Figure~\ref{fig:gaga_size} shows the distribution of hyperedge sizes (or the number of songs performed in concerts) along with the estimated hyperedge sizes by the ELCA model with $G=5$ and $K=2$, and the LCA model with 5 clusters. Adding an additional cluster to the model significantly improves the fit, especially on the tails of the distribution. 

\begin{figure}
  \centering
  \includegraphics[width=0.5\textwidth]{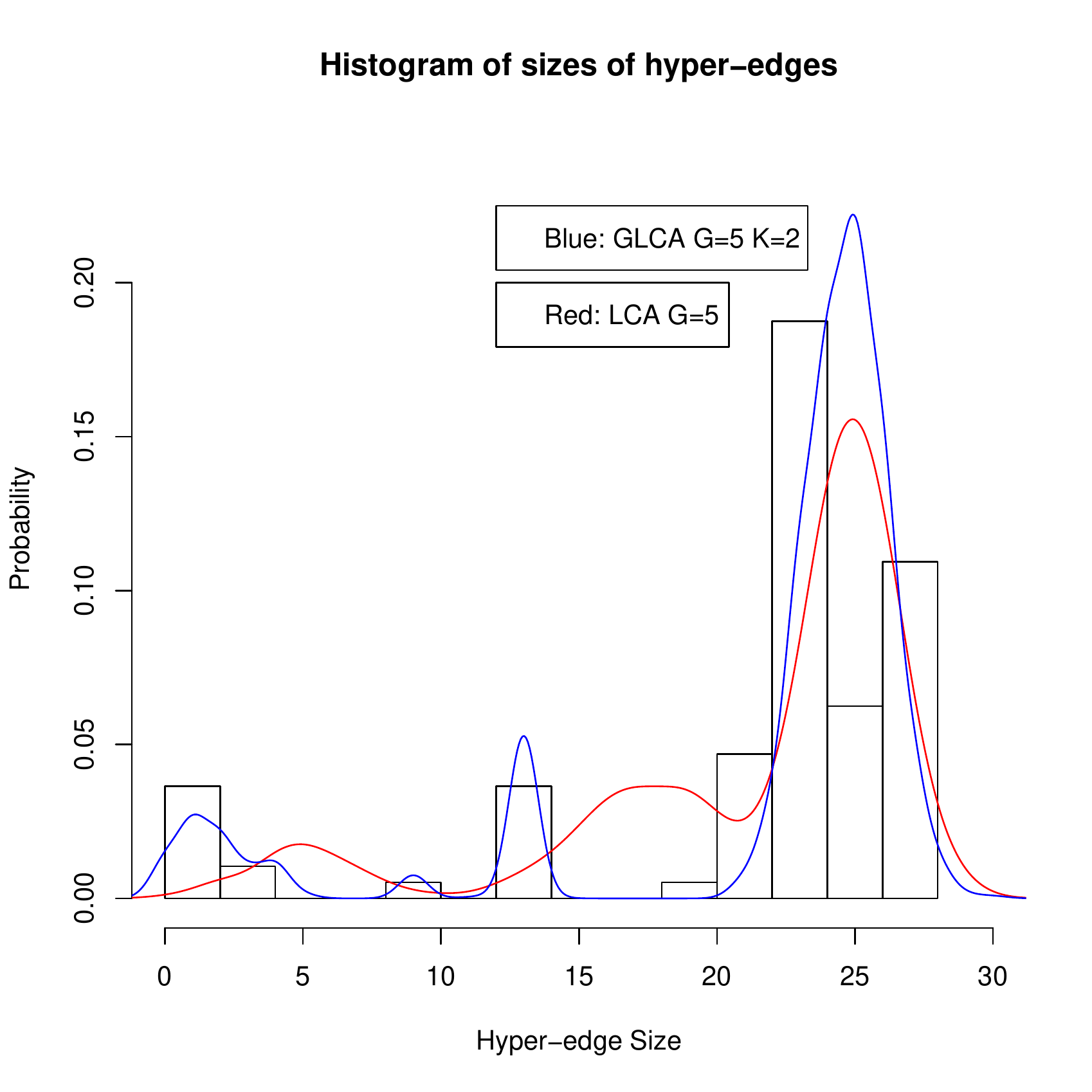}
  \caption{Size distribution of hyperedges of the Lady Gaga Concerts 2014 data set}
    \label{fig:gaga_size}
\end{figure}

\section{Conclusion}
In this paper, we have proposed the Extended Latent Class Analysis model as a generative model for random hypergraphs. The model introduces two clustering structures for hyperedges which captures variation in sizes of hyperedges. 

An EM algorithm has developed for model fitting where the M-step is implemented using a MM algorithm. Model selection is performed using cross validated likelihood method to account for the small sample sizes relative to the number of vertices. 

The model has been shown to give an improved fit relative to the Latent Class Analysis model for three illustrative examples. Furthermore, the fitted model reveals interesting and interpretable structure within the vertices and hyperedges. 

\bibliographystyle{asa}
\bibliography{refs_hgm} 

\appendix

\section{Proof on Proposition~\ref{comp_glca}}
\begin{proof}
   We can write $A = \sum_{i=1}^{N} A_{i}$ where $A_{i}=1$ if node $i$ appears in the hyperedge and $A_{i}=0$ otherwise. Similarly, we write $B=\sum_{i=1}^{N} B_{i}$. Let $Z_{A}$ be the latent cluster assignment of $X_{A}$ where $Z_{A} = g$ if $X_{A}$ is generated from cluster $g$. Let $Z_{B}^{(1)}$ and $Z_{B}^{(2)}$ be the latent cluster and additional clusters assignments of $X_{B}$, where $Z_{B}^{(1)}=g$ and $Z_{B}^{(2)}=k$ if $X_{B}$ is generated from cluster $g$ and additional clusters $k$.
We have
\begin{eqnarray*}
  E(A) &=& \sum_{g=1}^{G} E(A|Z_{A}) Pr(Z_{A}=g) \\
         &=& \sum_{g=1}^{G} \sum_{i=1}^{N} E(A_{i} |Z_{A}=g) Pr(Z_{A}=g) \\
         &=& \sum_{g=1}^{G} \sum_{i=1}^{N} p_{ig} \pi_{g}
\end{eqnarray*}
\begin{eqnarray*}
   E(B) &=& \sum_{g=1}^{G} \sum_{k=1}^{K} E(B|Z_{B}^{(1)}=g, Z_{B}^{(2)}=k) Pr(Z_{B}^{(1)}=g, Z_{B}^{(2)}=k) \\
         &=& \sum_{g=1}^{G} \sum_{k=1}^{K} \sum_{i=1}^{N} E(B_{i}|Z_{B}^{(1)}=g, Z_{B}^{(2)}=k) Pr(Z_{B}^{(1)}=g, Z_{B}^{(2)}=k) \\
         &=& \sum_{g=1}^{G} \sum_{k=1}^{K} \sum_{i=1}^{N} \phi_{ig} a_{k} \tau_{k} \pi_{g} \\
         &=& \sum_{g=1}^{G} \sum_{i=1}^{N} p_{ig} \pi_{g} \\
         &=& E(A)
\end{eqnarray*}
For the variance of the LCA model, we have that
\begin{eqnarray*}
  Var(A) = \sum_{i=1}^{N} Var( A_{i}) + 2 \sum_{i<j}^{N} Cov(A_{i}, A_{j})
\end{eqnarray*}
where
\begin{eqnarray*}
  Var(A_{i}) &=& E(A_{i}^{2}) - E(A_{i})^{2} \\
                  &=& Pr(A_{i}=1) - Pr(A_{i}=1)^{2} \\
                  &=& \sum_{g=1}^{G} p_{ig} \pi_{g} - \Big( \sum_{g=1}^{G} p_{ig} \pi_{g} \Big)^{2} \\
\end{eqnarray*}
\begin{eqnarray*}
   Cov(A_{i}, A_{j}) &=& E(A_{i} A_{j}) - E(A_{i}) E(A_{j})  \\
                   &=& Pr(A_{i}=A_{j}=1) - Pr(A_{i}=1) Pr(A_{j}=1) \\
                  &=& \sum_{g=1}^{G} p_{ig} p_{jg} \pi_{g} - \Big( \sum_{g=1}^{G} p_{ig} \pi_{g} \Big) \Big( \sum_{g=1}^{G} p_{jg} \pi_{g} \Big)
\end{eqnarray*}
Hence, we have that
\begin{eqnarray*}
   Var(A) = \sum_{i=1}^{N} \sum_{g=1}^{G} p_{ig} \pi_{g} - \sum_{i=1}^{N} \Big( \sum_{g=1}^{G} p_{ig} \pi_{g} \Big)^{2} \\
                   + 2 \sum_{i<j}^{N} \sum_{g=1}^{G} p_{ig} p_{jg} \pi_{g} - 2 \sum_{i<j}^{N} \Big( \sum_{g=1}^{G} p_{ig} \pi_{g} \Big) \Big( \sum_{g=1}^{G} p_{jg} \pi_{g} \Big) 
\end{eqnarray*}
Now, 
\begin{eqnarray*}
  Var(B) = \sum_{i=1}^{N} Var(B_{i}) + 2 \sum_{i<j}^{N} Cov(B_{i}, B_{j}) 
\end{eqnarray*}
\begin{eqnarray*}
  Var(B_{i}) &=& Pr(B_{i}=1) - Pr(B_{i}=1)^{2} \\
        &=& \sum_{g=1}^{G} \sum_{k=1}^{K} \phi_{ig} a_{k} \tau_{k} \pi_{g} - \Big( \sum_{g=1}^{G}  \sum_{k=1}^{K} \phi_{ig} a_{k} \tau_{k} \pi_{g} \Big)^{2} \\
       &=& \sum_{g=1}^{G} p_{ig} \pi_{g} - \Big( \sum_{g=1}^{G} p_{ig} \pi_{g} \Big)^{2} 
\end{eqnarray*}
\begin{eqnarray*}
   Cov(B_{i}, B_{j}) &=& Pr(B_{i}=B_{j}=1) - Pr(B_{i}=1) Pr(B_{j}=1) \\
                            &=& \sum_{g=1}^{G} \sum_{k=1}^{K} \phi_{ig} \phi_{jg} a_{k}^{2} \pi_{g} \tau_{k} - \Big( \sum_{g=1}^{G} p_{ig} \pi_{g} \Big) \Big(\sum_{g=1}^{G} p_{jg} \pi_{g} \Big) \\
\end{eqnarray*}
We have
\begin{eqnarray*}
   Var(B) = \sum_{i=1}^{N} \sum_{g=1}^{G} p_{ig} \pi_{g} - \sum_{i=1}^{N} \Big( \sum_{g=1}^{G} p_{ig} \pi_{g} \Big)^{2} \\
                 + 2 \sum_{i<j}^{N} \sum_{g=1}^{G} \sum_{k=1}^{K} \phi_{ig} \phi_{jg} a_{k}^{2} \pi_{g} \tau_{k} - 2 \sum_{i<j}^{N} \Big( \sum_{g=1}^{G} p_{ig} \pi_{g} \Big) \Big( \sum_{g=1}^{G} p_{jg} \pi_{g} \Big)
\end{eqnarray*}
Now, 
\begin{eqnarray*}
  Var(B) - Var(A) &=& 2 \sum_{i<j}^{N} \sum_{g=1}^{G} \sum_{k=1}^{K} \phi_{ig} \phi_{jg} a_{k}^{2} \pi_{g} \tau_{k} -  2 \sum_{i<j}^{N} \sum_{g=1}^{G} p_{ig} p_{jg} \pi_{g} \\
                    &=& 2 \sum_{i<j}^{N} \sum_{g=1}^{G} \Big( \sum_{k=1}^{K} \phi_{ig} \phi_{jg} a_{k}^{2} \tau_{k} - p_{ig} p_{jg} \Big) \pi_{g} \\
              &=& 2 \sum_{i<j}^{N} \sum_{g=1}^{G} \phi_{ig} \phi_{jg} \Big( \sum_{k=1}^{K}  a_{k}^{2} \tau_{k} - \Big(\sum_{k=1}^{K} a_{k} \tau_{k} \Big)^{2} \Big) \pi_{g}
\end{eqnarray*}
To show the quantity above is non-negative, we have to show that 
\begin{eqnarray*}
   \sum_{k=1}^{K}  a_{k}^{2} \tau_{k} - \Big(\sum_{k=1}^{K} a_{k} \tau_{k} \Big)^{2} \ge 0
\end{eqnarray*}  
which follows from Jensen's inequality.
\end{proof}

\section{M-step of EM Algorithm}
\label{sec_em_detail}
For the M-step, we need to maximize $Q(\theta|\theta^{(t)})$ with respect to the model parameters $\{ \phi_{ig} \}$, $\{ a_{k} \}$, $\{ \pi_{g} \}$ and $\{ \tau_{k} \}$.

\subsection{Maximize w.r.t. $\phi_{ig}$}
For fixed $i$ and $g$, the objective function retaining terms involving $\phi_{ig}$ can be written as
\begin{eqnarray}
 \label{obj_phi}
  Q = \sum_{j=1}^{M} \sum_{k=1}^{K} \widehat{Z^{(1)}_{jg} Z^{(2)}_{jk}} \Big( x_{ij} \log(\phi_{ig}) + (1-x_{ij}) \log(1-a_{k}\phi_{ig}) \Big) 
\end{eqnarray}
Since an analytic expression for $ \argmax_{\phi_{ig}}\{Q\}$ does not exist due to the term $ \log(1-a_{k}\phi_{ig})$, we apply the MM (Minorization Maximization) algorithm \citep{hunter04}. We first apply a quadratic lower bound on the concave function $\log(1-a_{k}\phi_{ig})$  for $k < K$. We let 
\begin{eqnarray*} 
  f(\phi_{ig}) = \log(1-a_{k}\phi_{ig}).
\end{eqnarray*}
We then have 
\begin{eqnarray*}
  \frac{\partial f}{\partial \phi_{ig}} = \frac{-a_{k}}{1-a_{k} \phi_{ig}}
\end{eqnarray*}
\begin{eqnarray*}
  \frac{\partial^{2} f}{\partial \phi_{ig}^{2}} = \frac{-a_{k}^{2}}{(1-a_{k} \phi_{ig})^{2}} \ge \frac{ -a_{k}^{2} }{(1-a_{k})^{2}}
\end{eqnarray*}
Hence, we have
\begin{eqnarray*}
  \log(1-a_{k} \phi_{ig}) \ge \log(1-a_{k} \phi^{(t)}_{ig}) + \Big( \frac{-a_{k}}{1-a_{k}\phi^{(t)}_{ig}} \Big) (\phi_{ig}-\phi^{(t)}_{ig}) + \frac{1}{2} \Big( \frac{-a_{k}^{2}}{(1-a_{k})^{2}} \Big) (\phi_{ig} - \phi^{(t)}_{ig})^{2}
\end{eqnarray*}

Hence, the objective function in (\ref{obj_phi}) up to an additive constant can be minorized by the function below.
\begin{eqnarray}
 \label{lower_phi}
  Q_{lower} = \sum_{j=1}^{M} \sum_{k=1}^{K} \widehat{Z^{(1)}_{jg} Z^{(2)}_{jk}} x_{ij} \log(\phi_{ig})  + \sum_{j=1}^{M} \sum_{k=1}^{K-1} \widehat{Z^{(1)}_{jg} Z^{(2)}_{jk}} (1-x_{ij}) \\ \ \nonumber
  \Bigg( \Big( \frac{-a_{k}}{1-a_{k}\phi^{(t)}_{ig}} \Big) \phi_{ig} + \frac{1}{2} \Big( \frac{-a_{k}^{2}}{(1-a_{k})^{2}} \Big) (\phi_{ig} - \phi^{(t)}_{ig})^{2} \Bigg) \\ \nonumber
  + \sum_{j=1}^{M} \widehat{Z_{jg}^{(1)} Z_{jK}^{(2)} } (1-x_{ij}) \log(1-\phi_{ig})
\end{eqnarray}

To simplify the expression above, we define the quantities below.
\begin{eqnarray*}
   A_{1} = \sum_{j=1}^{M} \sum_{k=1}^{K} \widehat{Z^{(1)}_{jg} Z^{(2)}_{jk}} x_{ij}
\end{eqnarray*} 
\begin{eqnarray*}
  A_{2} = \sum_{j=1}^{M} \widehat{Z^{(1)}_{jg} Z^{(2)}_{jK}} (1-x_{ij})
\end{eqnarray*}
\begin{eqnarray*} 
   B_{1}=  \sum_{j=1}^{M} \sum_{k=1}^{K-1} \widehat{Z^{(1)}_{jg} Z^{(2)}_{jk}} (1-x_{ij}) \frac{-a_{k}}{1-a_{k} \phi_{ig}^{(t)}} 
\end{eqnarray*}
\begin{eqnarray*}
  B_{2} =  \sum_{j=1}^{M} \sum_{k=1}^{K-1} \widehat{Z^{(1)}_{jg} Z^{(2)}_{jk}} (1-x_{ij}) \frac{1}{2} \frac{-a_{k}^{2}}{(1-a_{k})^{2}}
\end{eqnarray*}

Now, the lower bound in (\ref{lower_phi}) can be written as below.
\begin{eqnarray*}
  Q_{lower} = A_{1} \log(\phi_{ig}) + A_{2} \log(1-\phi_{ig}) + B_{1} \phi_{ig} + B_{2} (\phi_{ig}-\phi_{ig}^{(t)})^{2}
\end{eqnarray*}
Taking derivative with respect to $\phi_{ig}$, we have
\begin{eqnarray*}
  \frac{A_{1}}{\phi_{ig}} - \frac{A_{2}}{1-\phi_{ig}} + B_{1} + 2B_{2}\phi_{ig} -2B_{2}\phi_{ig}^{(t)} = 0
\end{eqnarray*}
Let $C=B_{1} - 2B_{2} \phi^{(t)}_{ig}$, we have
\begin{eqnarray}
\label{update_phi}
  \phi_{ig}^{3} - \frac{2B_{2}-C}{2B_{2}} \phi_{ig}^{2} - \frac{C-A_{1}-A_{2}}{2B_{2}} \phi_{ig} - \frac{A_{1}}{2B_{2}} = 0
\end{eqnarray}
Solving the cubic equation above results in the update for $\phi_{ig}$.

\subsection{Maximize w.r.t. $a_{k}$}
For a fixed $k$, the objective function (\ref{glca_ll}) retaining terms involving $a_{k}$ can be expressed as
\begin{eqnarray}
 \label{obj_a}
  Q = \sum_{j=1}^{M} \sum_{g=1}^{G} \widehat{Z^{(1)}_{jg} Z^{(2)}_{jk}} \Big( \sum_{i=1}^{N} x_{ij} \log(a_{k}) + (1-x_{ij}) \log(1-a_{k}\phi_{ig}) \Big) .
\end{eqnarray}
Since an analytic expression for $ \argmax_{a_{k}}\{Q\}$ does not exist due to the term $ \log(1-a_{k}\phi_{ig})$, we apply the MM (Minorization Maximization) algorithm. We first apply a quadratic lower bound on the concave function 
\begin{eqnarray*}
   \log(1-a_{k} \phi_{ig}) \ge \log(1-a^{(t)}_{k} \phi_{ig}) + \Big( \frac{-\phi_{ig}}{1-a_{k}^{(t)} \phi_{ig}} \Big) (a_{k}-a_{k}^{(t)}) + \frac{1}{2} \Big( \frac{-\phi_{ig}^{2}}{(1-\phi_{ig})^{2} }\Big) (a_{k}-a_{k}^{(t)})^{2} 
\end{eqnarray*}

Hence, (\ref{obj_a}) up to an additive constant can be minorized by the function below.
\begin{eqnarray}
\label{a_lower}
  Q_{lower} = \Bigg( \sum_{j=1}^{M} \sum_{g=1}^{G} \widehat{Z_{jg}^{(1)} Z_{jk}^{(2)}} \sum_{i=1}^{N} x_{ij} \Bigg) \log(a_{k}) +  \sum_{j=1}^{M} \sum_{g=1}^{G} \widehat{Z_{jg}^{(1)} Z_{jk}^{(2)}} \sum_{i=1}^{N} (1-x_{ij}) \\ \nonumber
 \Bigg( \frac{-\phi_{ig}}{1-a_{k}^{(t)} \phi_{ig}} a_{k} + \frac{1}{2} \Big(\frac{-\phi_{ig}^{2}}{(1-\phi_{ig})^{2}} \Big)(a_{k}-a_{k}^{(t)})^{2} \Bigg) 
\end{eqnarray}

To simply the expression above, we define the following quantities.
\begin{eqnarray*}
  A = \sum_{j=1}^{M} \sum_{g=1}^{G} \widehat{Z_{jg}^{(1)} Z_{jk}^{(2)}} \sum_{i=1}^{N} x_{ij}
\end{eqnarray*}
\begin{eqnarray*}
  B = \sum_{j=1}^{M} \sum_{g=1}^{G} \widehat{Z_{jg}^{(1)} Z_{jk}^{(2)}} \sum_{i=1}^{N} (1-x_{ij}) \Big( \frac{-\phi_{ig}}{1-a_{k}^{(t)} \phi_{ig}} \Big)
\end{eqnarray*}
\begin{eqnarray*}
  C = \sum_{j=1}^{M} \sum_{g=1}^{G} \widehat{Z_{jg}^{(1)} Z_{jk}^{(2)}} \sum_{i=1}^{N} (1-x_{ij}) \frac{1}{2} \Big(\frac{-\phi_{ig}^{2}}{(1-\phi_{ig})^{2}} \Big)
\end{eqnarray*}
Taking derivative of (\ref{obj_a}) with respect to $a_{k}$, we have
\begin{eqnarray*}
  \frac{ \partial Q_{lower}}{\partial a_{k}} = \frac{A}{a_{k}} + B + 2C(a_{k}-a_{k}^{(t)}) = 0
\end{eqnarray*}

Let $D = (\frac{B}{2C} - a_{k}^{(t)})$, $E=-\frac{A}{2C}$, we have
\begin{eqnarray}
\label{update_a}
  \hat{a}_{k} = \Big( E+\frac{D^{2}}{4} \Big)^{1/2} - \frac{D}{2}
\end{eqnarray}

\subsection{Maximize w.r.t. $\pi_{g}$ and $\tau_{k}$}
The update for $\pi_{g}$ and $\tau_{k}$ are straightforward and are given below.
\begin{eqnarray}
\label{update_pi}
  \hat{\pi}_{g} \propto \sum_{j=1}^{M} \sum_{k=1}^{K} \widehat{ Z_{jg}^{(1)} Z_{jk}^{(2)} }
\end{eqnarray}
\begin{eqnarray}
\label{update_tau}
  \hat{\tau}_{k} \propto \sum_{j=1}^{M} \sum_{g=1}^{G}  \widehat{ Z_{jg}^{(1)} Z_{jk}^{(2)} }
\end{eqnarray}

\section{Model Selection \& Data Description}
\label{sec_supp}

\setcounter{table}{0}
\renewcommand{\thetable}{A\arabic{table}}

\begin{table}[hbtp]
 \caption{Model Selection for the Star Wars Data Set} 
 \begin{tabular}{| l | l | r|}
   \hline
   \textbf{No. of Clusters} & \textbf{No. of Additional Clusters}  & Cross Validated Loglikelihood \\ \hline
     1 & 1 & -194.77 \\ \hline
     1 & 2 & -206.18 \\ \hline
     2 & 1 & -194.39 \\ \hline
     2 & 2 & -193.92 \\ \hline
     2 & 3 & -194.63 \\ \hline
     3 & 1 & -194.12 \\ \hline
     3 & 2 & \textbf{-190.78} \\ \hline
     3 & 3 & -194.96 \\ \hline
     4 & 1 & -194.99 \\ \hline
  \end{tabular}
    \label{ms_starwars}
\end{table}

\begin{table}[hbtp]
  \caption{Model Selection for Lady Gaga Concerts 2014 Data Set}
 \begin{tabular}{| l | l | r|}
   \hline
   \textbf{No. of Clusters} & \textbf{No. of Additional Clusters}  & Cross Validated Loglikelihood \\ \hline
     1 & 1 &-521.82 \\ \hline
     1 & 2 & -822.68 \\ \hline
     2 & 1 & -389.93 \\ \hline
     2 & 2 & -306.86 \\ \hline
     2 & 3 & -270.79 \\ \hline
     2 & 4 & -274.56 \\ \hline
     3 & 1 & -340.95 \\ \hline
     3 & 2 & -274.87 \\ \hline
     3 & 3 & -261.05 \\ \hline
     3 & 4 & -246.11 \\ \hline
     3 & 5 & -250.71 \\ \hline
     4 & 1 & -338.64 \\ \hline
     4 & 2 & -237.21 \\ \hline
     4 & 3 & -248.05 \\ \hline
     5 & 1 & -322.92 \\ \hline
     5 & 2 & \textbf{-215.39} \\ \hline
     5 & 3 & -236.71 \\ \hline
     6 & 1 & -328.36 \\ \hline
  \end{tabular}
  \label{ms_gaga}
\end{table}

\end{document}